\def\D{\cal D}\def\I{\cal I}
\def\={=\!\!}
\newcommand{\psfrag}[2]{}
\newtheorem{theorem}{Theorem}
\newtheorem{definition}[theorem]{Definition}
\newtheorem{lemma}[theorem]{Lemma}
\newtheorem{proposition}[theorem]{Proposition}
\newtheorem{corollary}[theorem]{Corollary}
\title{Dependence and Independence\thanks{Research partially supported by the EUROCORES LogICCC LINT programme.}
}
\author{Erich Gr\"adel\\ 
Mathematische Grundlagen der Informatik,\\ RWTH Aachen University,
 52056 Aachen, Germany \and Jouko V\"a\"an\"anen\thanks{Research partially supported by
grant 40734 of the Academy of Finland.}\\ Department of Mathematics and Statistics\\ University of Helsinki, Finland\\ and \\
Institute for Logic, Language and Computation \\ University of Amsterdam, The Netherlands}
\begin{document}
\maketitle
\def\vx{\vec{x}}
\def\vy{\vec{y}}
\def\vz{\vec{z}}
\def\vu{\vec{u}}
\def\vv{\vec{v}}
\def\vw{\vec{w}}
\def\boto{\ \bot\ }
\newcommand{\idep}[3]{#1\ \bot_{#2}\ #3}
\newcommand{\idepb}[2]{#1\ \bot\ #2}
\def\yxz{\idep{\vy}{\vx}{\vz}}
\def\xzy{\idep{\vx}{\vz}{\vy}}
\def\xzx{\idep{\vx}{\vz}{\vx}}
\def\xyx{\idep{\vx}{\vy}{\vx}}
\def\xzu{\idep{\vx}{\vz}{\vu}}
\def\uzy{\idep{\vu}{\vz}{\vy}}
\def\yzy{\idep{\vy}{\vz}{\vy}}
\def\yxy{\idep{\vy}{\vx}{\vy}}
\def\xyu{\idep{\vx}{\vy}{\vu}}
\def\xy{\idep{\vx}{\vy}}
\def\dyx{\=(\vy,\vx)}

\begin{abstract}
We introduce an atomic formula $\yxz$  intuitively saying that the variables $\vy$ are independent from the variables $\vz$ if the variables $\vx$ are kept constant. We contrast this with  dependence logic $\D$ \cite{MR2351449} based on the atomic formula $\=(\vx,\vy)$, actually equivalent to  $\yxy$, saying that the variables $\vy$ are totally determined by the variables $\vx$. We show that $\yxz$ gives rise to a natural logic capable of formalizing basic intuitions about independence and dependence.  We show that $\yxz$ can be used to give partially ordered quantifiers and IF-logic an alternative interpretation without some of the shortcomings related to so called signaling that interpretations using $\=(\vx,\vy)$ have. 

\end{abstract}

Of the numerous uses of the word ``dependence" we focus on the concept of an attribute\footnote{color, price, salary, height, etc} depending on a number of other similar attributes when we observe the world. We call these attributes variables.
We follow the approach of \cite{MR2351449} and focus on the strongest form of dependence, namely functional dependence. This is  the kind of dependence in which some given variables absolutely deterministically determine some variables, as surely as $x$ and $y$ determine $x+y$ and $x\cdot y$ in elementary arithmetic. The idea is that weaker forms of dependence can be understood in terms of the strongest. Functional dependence of $x$ on $\vy$ is denoted in \cite{MR2351449} by the symbol
 $\=(\vy,{x})$.  If we adopt the shorthand
$$\=(\vy,\vx)\mbox{ for }\=(\vy,{x_1})\wedge\ldots\wedge\=(\vy,{x_n})$$
we get a more general functional dependence. Although there are many different intuitive meanings for 
$\dyx$, such as ``$\vy$ totally determines $\vx$" or ``$\vx$ is a function of $\vy$", the best way to understand the concept is to give it semantics: 
\begin{definition}[\cite{MR98k:03068,MR2351449}]\label{def}
 Sets of assigments are called {\em teams}. A team  $X$ satisfies $\=(\vy,\vx)$ in $M$, in symbols 
 $M\models_X \ \=(\vy,\vx) $, or just $X\models\ \=(\vy,\vx)$, if \begin{equation}\label{funct}\forall s,s'\in X(s(\vy)=s'(\vy)\to s(\vx)=s'(\vx)).\end{equation}
\end{definition}
Condition (\ref{funct}) is a universal statement. As a consequence it is closed downward, that is, if a team satisfies it, every subteam does. In particular, the empty team satisfies it for trivial reasons. Also, every singleton team $\{s\}$ satisfies
it, again for trivial reasons.

Functional dependence has been studied in database theory and some basic properties, called {\bf Armstrong's Axioms} have been isolated \cite{armstrong}. These axioms state the following properties of $\=(\vy,\vx)$:

\begin{enumerate}
\item $\=(\vx,\vx)$. Anything is functionally dependent of itself.
\item If $\=(\vy,\vx)$ and $\vy\subseteq\vz$, then $\=(\vz,\vx)$. Functional dependence is preserved by increasing input data.
\item If $\vy$ is a permutation of $\vz$, $\vu$ is a permutation of $\vx$, and $\=(\vz,\vx)$, then $\=(\vy,\vu)$. Functional dependence does not look at the order of the variables.
\item If $\=(\vy,\vz)$  and $\=(\vz,\vx)$, then  $\=(\vy,\vx)$. Functional dependences can be transitively composed. 
\end{enumerate} 

These rules  completely describe the behavior of $\=(\vy,\vx)$  in the following sense: If $T$ is a finite set of dependence atoms of the form $\=(\vy,\vx)$ for various $\vx$ and $\vy$, then $\=(\vy,\vx)$ follows from $T$ according to the above rules if and only if every team that satisfies $T$ also satisfies $\=(\vy,\vx)$. Let us see how Armstrong \cite{armstrong} proved this: Suppose $T\models\ \=(\vy,\vx)$, i.e. every team satisfying $T$ satisfies $\=(\vy,\vx)$. Let $\vz\supseteq\vy$ be the list of variables $z$ such that $\=(\vy,z)$ can be derived using the rules (1)-(4) from $T$. Let $x\in \vx$. We show $x\in \vz$. Suppose not. Let $X$ be the team $\{s,s'\}$, where $s(z)=s'(z)=0$ for all $z\in \vz$, but $s(u)=0, s'(u)=1$ for all $u\not\in\vz$. Note that $X\not\models\ \=(\vy,\vx)$ because $x\not\in\vz$.   So it suffices to show $X\models T$. Suppose $\=(\vu,\vv)\in T$. Let $v\in \vv$. We show $X\models\ \=(\vu,v)$. If $\vu\cap -\vz\ne\emptyset$, then $s(\vu)\ne s'(\vu)$. So w.l.o.g. $\vu\subseteq\vz$. If $v$ is not $x$, then $s(v)=s'(v)$. So w.l.o.g. $v$ is the variable $x$. But now transitivity gives $x\in \vz$, contradiction. QED

We shall now give the concept of independence a similar treatment as we gave above to the concept of dependence. Again we start from the strongest conceivable form of independence of variables $x$ and $y$, a kind of total lack of connection between them, which we denote $x\bot y$.
We can read this in many ways:

\begin{itemize}
\item  $x$ and $y$ are completely independent from each other.
\item $x$ and $y$ occur totally freely.
\item $x$ and $y$ give absolutely no information of each other.
\item Every conceivable pattern occurs for $x$ and $y$.
\end{itemize}

Suppose balls of different sizes and masses are dropped from the Leaning Tower of Pisa in order to observe how the size and mass influence the time of descent. One may want to make sure that in this test:
\begin{equation}\label{wea3}\begin{array}{l}
\mbox{\it } \mbox{\it The size of the ball is independent of the mass of the ball.} 
\end{array}\end{equation} 
How to make sure of this? Ideally one would vary the sizes and the masses freely so that if one mass is chosen for one size it would be also be chosen for all the other sizes, and if one size is chosen for one mass it is also chosen for all other masses. This would eliminate any dependence between size and mass and the test would genuinely tell us something about the time of descent itself.  We would then say that the size and the mass were made independent of each other in the strongest sense of the word.

Suppose we have data about tossing two coins and we want to state:
\begin{equation}\label{coin}\begin{array}{l}
\mbox{\it Whether one coin comes heads up is independent}\\
\mbox{\it  of whether the other coin comes heads up.}
\end{array}\end{equation} 
To be convinced, one should look at the data and point out that all four possibilities occur. Probability theory has its own concept of independence which however is in harmony with ours, only we do not pay attention to how many times certain pattern occurs.  In probability theory, roughly speaking,
two random variables are independent if observing one does not affect the (conditional) probability of the other. We could say the same without paying attention to probabilities as follows: two variables are independent if observing one does not restrict in any way what the value of the other is.
 
If we look at any demographic data except for rather small data we may observe:
\begin{equation}\label{wea4}\begin{array}{l}
\mbox{\it A person's gender is independent} \\
\mbox{\it of whether the person speaks Spanish. }\\
\end{array}\end{equation} 
We would use a given data as support of  the truth of this by finding in the  data a male and a female that speak Spanish, and a male and a female that do not speak Spanish. Once this is established it would be rather difficult to claim that there is some dependence in the given data between the gender and the ability to speak Spanish. Of course this analysis again ignores the probabilities. That is, the data may have many females who speak Spanish but only one male that speaks Spanish and still our requirement for independence would be satisfied. This just shows that our criterion is really of a {\em logical} kind, not of a probabilistic kind.

When Galileo dropped balls of the same size from the Leaning Tower of Pisa he was able to observe:

\begin{equation}
\begin{array}{l}
\mbox{\it Their time of descent is independent of their mass.}
\end{array}\end{equation} 
What did this mean? It means that each ball has the same time of descent, a constant, and therefore Galileo could conclude that it is independent of the mass. Being constant is a  kind of strong independence different from the above examples where we emphasized that in a sense all possible patterns should occur. When Galileo dropped balls from the tower he did not observe all possible patterns and still he was able to conclude a certain independence.

Einstein stated in his theory of special relativity that:
\begin{equation}
\begin{array}{l}
\mbox{\it  The speed of light is independent of the observer's state of motion.}
\end{array}\end{equation} 
This is another famous example of independence where one of the variables is constant. Of course the constancy of the speed of light was  not considered a scientific fact at the time, although observations supported it. 

%
%
%
So we should accept that one form of total independence is when one of the variables is a constant.

Another feature of the strongest possible independence is symmetry. In our example (2)-(6) there is a total symmetry of the variables. There are weaker forms of independence where symmetry is not present.  For example: The result of collecting data about trading might support the claim:
\begin{equation}\label{wea1}\begin{array}{l}
\mbox{\it This investor's trading is independent of non-public}\\
\mbox{\it information about the company.}
\end{array}\end{equation} 
However, there would be no reason to believe that as a consequence:
\begin{equation}\label{wea2}\begin{array}{l}
\mbox{\it Non-public information about the company is independent of }\\
\mbox{\it this investor's trading.}
\end{array}\end{equation} 
%

%

Let us now introduce the semantics of $x\boto y$:

\begin{definition}\label{depdef}
A team $X$ satisfies the atomic formula $x\boto y$ if \begin{equation}
\label{indep}\forall s,s'\in X\exists s''\in X(s''(y)=s(y)\wedge s''(x)=s'(x)).
\end{equation} 
\end{definition}

What this definition says is the following criterion for a team $X$ of ``data" to manifest the independence of $x$ and $y$: Knowing $s(x)$ alone for a given $s\in X$ gives no information about $s(y)$, because there may be $s'\in X$ with $s'(y)\ne s(y)$, and then (\ref{indep}) gives $s''\in X$ with $s''(x)=s(x)$ and $s''(y)=s'(y)$. So just when we were going to say that $s(x)$ is enough evidence to conclude that the value of $y$ is $s(y)$, we see this other $s''$ with the same value for $x$ but a different value for $y$.

We can immediately observe that a constant variable is independent of every other variable, including itself. To see this, suppose $x$ is constant in $X$. Let $y$ be any variable, possibly $y=x$. If $s,s'\in X$ are given, we need $s''\in X$ such that $s''(x)=s(x)$ and $s''(y)=s'(y)$. We can simply take $s''=s'$. Now $s''(x)=s(x)$, because $x$ is constant in $X$. Of course, $s''(y)=s'(y)$. Conversely, if $x$ is independent of every other variable, it is clearly constant, for it would have to be independent of itself, too. So we have $$\=(x)\iff x\boto x.$$

We can also immediately observe the symmetry of independence, because the criterion (\ref{indep}) is symmetrical in $x$ and $y$. More exactly, $s''(y)=s(y)\wedge s''(x)=s'(x)$ and  $s''(x)=s'(x)\wedge s''(y)=s(y)$ are trivially equivalent.

Our observations on constancy and symmetry lead to the following definition:

\begin{definition} The following rules are called the {\bf Independence Axioms}
\begin{enumerate}
\item If $x\boto y$, then $y\boto x$ (Symmetry Rule).
\item If $x\boto x$, then $y\boto x$ (Constancy Rule).
\end{enumerate} 
\end{definition}

It may seem that independence must have much more content than what these two axioms express, but they are actually complete in the following sense: 

\begin{theorem}[Completeness of the Independence Axioms]
If $T$ is a finite set of dependence atoms of the form $u\boto v$ for various $u$ and $v$, then $y\boto x$ follows from $T$ according to the above rules if and only if every team that satisfies $T$ also satisfies $y\boto x$. 

\end{theorem}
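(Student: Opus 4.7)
The plan is to argue by contraposition: assume $y\boto x$ is not derivable from $T$ by the two rules, and build a team that satisfies $T$ but refutes $y\boto x$. Soundness of the rules is immediate: the symmetry of condition (\ref{indep}) in $x$ and $y$ gives the Symmetry Rule, and the argument just before Definition~\ref{depdef} that a constant variable is independent of every variable gives the Constancy Rule.

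Let $C=\{z : z\boto z\in T\}$. The first step is to compute the deductive closure $T^*$ of $T$ under the two rules explicitly. I would show
$$T^* \;=\; \{u\boto v \;:\; u\boto v\in T,\ v\boto u\in T,\ u\in C,\ \text{or}\ v\in C\}.$$
Closure of the right-hand side under Symmetry is manifest, and for Constancy it suffices to observe that the only way either rule can produce a self-loop $u\boto u$ is from $u\boto u$ itself, so a derivable self-loop already lies in $T$; hence the right-hand set contains every self-loop it ought to. Conversely, every atom on the right is obtained from $T$ by at most two rule applications. From the hypothesis $y\boto x\notin T^*$ one then reads off $y\boto x, x\boto y\notin T$ and $x,y\notin C$.

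Now I would construct the countermodel. Let $V$ be the finite set of variables occurring in $T$, augmented with $x$ and $y$, and define
$$X=\bigl\{s\colon V\to\{0,1\} \;:\; s(c)=0\text{ for all }c\in C,\text{ and not both }s(x)=1\text{ and }s(y)=1\bigr\};$$
the last clause is vacuous if $y=x$. Refuting $y\boto x$ is direct: when $y=x$ the variable $x$ is not constant in $X$, and when $y\ne x$ the value $(s(y),s(x))=(1,1)$ is missing from the projection of $X$ onto $\{y,x\}$, so picking $s,s'\in X$ with $s(y)=1$ and $s'(x)=1$ leaves no legal $s''$.

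The main obstacle is to verify $X\models u\boto v$ for every atom $u\boto v\in T$. If $u\in C$ or $v\in C$, the witness $s''=s$ (respectively $s''=s'$) works, since the constant coordinate always equals $0$; the sub-case $u=v$ is immediate from $u\in C$. The remaining case is $u\ne v$ with $u,v\notin C$, where the restriction excluding $s(x)=s(y)=1$ could a priori damage the required product structure of the projection onto $\{u,v\}$. But our hypothesis $y\boto x\notin T^*$ forces $\{u,v\}\ne\{x,y\}$, so a short case split on $|\{u,v\}\cap\{x,y\}|\in\{0,1,2\}$ shows that for each $(a,b)\in\{0,1\}^2$ the remaining coordinates of a candidate $s''\in X$ can be chosen to dodge the single forbidden pattern, yielding the required witness and completing the proof.
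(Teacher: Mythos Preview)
Your argument is essentially the paper's: assume non-derivability, read off $y\boto x,\ x\boto y\notin T$ and $x,y\notin C$, then build a small team that is free on all non-constant variables except for a single obstruction on $\{x,y\}$, and verify $T$ by the same case split on $|\{u,v\}\cap\{x,y\}|$; the paper restricts $(s(x),s(y))$ to the diagonal $\{(0,0),(1,1)\}$ rather than excluding $(1,1)$, but this is cosmetic. One slip to fix: when $y=x$ the clause ``not both $s(x)=1$ and $s(y)=1$'' is \emph{not} vacuous---it collapses to $s(x)=0$, making $x$ constant and hence validating $x\boto x$; simply drop the clause in that degenerate case (as your next sentence evidently intends) and the proof goes through.
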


\begin{proof}

Let us see how this follows: Suppose $T\models\ y\boto x$. We try to derive $y\boto x$ from $T$. If 
$y\boto x\in T$ or $x\boto y\in T$, we are done by the Symmetry Rule. So we assume $y\boto x\notin T$ and $x\boto y\notin T$. Let $V$ be the set of variables $z$ such that $z\boto z\in T$. If $x\in V$ or $y\in V$, we are done by the Constancy Rule, so we assume $V\cap\{x,y\}=\emptyset$.   Consider a domain consisting of $V$ and two new elements $0$ and $1$.    
For $d\in\{0,1\}$ let $X_d$ consist of all $s$ such that if $v\in V$, then $s(v)=v$, and moreover $s(y)=s(x)=d.$
Finally, let $X=X_0\cup X_1$. 

Let us first observe that $X\not\models x\boto y$, because there are $s\in X$ with $s(x)=0$ and $s'\in X$ with $s'(y)=1$ but there is no $s''\in X$ such that both $s''(x)=0$ and $s''(y)=1$. 

Let us then show that $X\models T$. Suppose $u\boto v\in T$. If $u\in V$ or $v\in V$, then clearly $X\models u\boto v$. So we may assume $\{u,v\}\cap V=\emptyset$. In particular, $u\ne v$. Suppose $u=x$, $u=y$, $v=x$ or $v=y$. By symmetry we may assume $u=x$. We may then also assume $v\notin \{x,y\}$ for otherwise we are done. Let now $s,s'\in X$ be arbitrary. 
Let $s''(u)=s(u)$, $s''(v)=s'(v)$ and $s''(w)=w$ for $w\in V$. Then $s''\in X$. So we have proved 
$X\models u\boto v$ in this case. The final case is that $\{u,v\}\cap\{x,y\}=\emptyset$. In this case it is trivial that $X\models u\boto v$. 

\end{proof}

The independence atom $y\boto x$ turns out to be a special case of the more general notion
$$\yxz$$
the intuitive meaning of which is that the variable $\vy$ are totally independent of the variables $\vz$ when the variables $\vx$ are kept fixed.

Suppose objects of different forms (balls, pins, etc), different sizes and different masses are dropped from the Leaning Tower of Pisa in order to observe how the form, size and mass influence the time of descent. One may want to make sure that in this test:
\begin{equation}\label{gal}\begin{array}{l}
\mbox{\it } \mbox{\it For a fixed form, the size of the object is}\\ 
\mbox{\it independent of the mass of the object.} 
\end{array}\end{equation} 
How to make sure of this? Ideally one would vary for each form separately the sizes and the masses freely so that if one mass is chosen in that form for one size it would be also be chosen in that form for all the other sizes, and so on.  We would then say that the size and the mass were made independent of each other, given the form, in the strongest sense of the word.

We now give exact mathematical  content to $\yxz$: 
\begin{definition}\label{free}
A team $X$ satisfies the atomic formula $\vy\ \bot_{\vx}\ \vz$ if for all $s,s'\in X$ such that $s(\vx)=s'(\vx)$ there exists $s''\in X$ such that $s''(\vx)=s(\vx)$, $s''(\vy)=s(\vy)$, and $s''(\vz)=s'(\vz))$.

\end{definition}

In the case of the sentence (\ref{gal}) this means the following: A set of observation concerning the falling objects is said to satisfy the requirement (\ref{gal}) if for any two tests $s$ and $s'$ where the form of the objects was the same there is a test $s''$ still with the same form but which picks the size from test $s$ and the weight from test $s'$. Note that this is in harmony with there having been just one test, but of course no scientific experiment would be satisfactory with just one test. So when there are several tests the requirement of (\ref{gal}) being satisfied actually pushes the number of tests up. 

Here are some elementary properties of $\yxz$:

\begin{lemma}
 $\=(\vx,\vy)$ logically implies $\vy\ \bot_{\vx}\ \vz$. 
\end{lemma}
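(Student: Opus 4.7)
The plan is to unfold both atomic formulas according to the semantic definitions and then simply exhibit the witness. Suppose $X$ is a team with $X\models\ \=(\vx,\vy)$. To establish $X\models\ \yxz$, I fix arbitrary $s,s'\in X$ with $s(\vx)=s'(\vx)$ and need to produce an $s''\in X$ satisfying the three equations $s''(\vx)=s(\vx)$, $s''(\vy)=s(\vy)$, and $s''(\vz)=s'(\vz)$.

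The key observation is that the hypothesis $X\models\ \=(\vx,\vy)$ means, by Definition \ref{def}, that $s(\vx)=s'(\vx)$ already forces $s(\vy)=s'(\vy)$. Consequently the assignment $s'$ itself fulfills all three requirements: trivially $s'(\vx)=s(\vx)$ (by choice of $s,s'$) and $s'(\vz)=s'(\vz)$, while the functional dependence gives $s'(\vy)=s(\vy)$. So I would simply take $s'':=s'$.

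There is essentially no obstacle here; the proof is a one-line verification once the definitions are unfolded. The lemma reflects the intuition that if $\vy$ is already determined by $\vx$, then after fixing $\vx$ one can pair the common $\vy$-value with \emph{any} $\vz$-value that appears in the team for that same $\vx$-value, so $\vy$ and $\vz$ are trivially independent given $\vx$.
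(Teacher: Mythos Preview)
Your proof is correct and is essentially identical to the paper's own argument: both take $s'':=s'$ after noting that $s(\vx)=s'(\vx)$ forces $s(\vy)=s'(\vy)$ by functional dependence.
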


\begin{proof}
Suppose $X$ satisfies $\=(\vx,\vy)$. To prove that $X$ satisfies 
$\vy\ \bot_{\vx}\ \vz$, let $s,s'\in X$ such that $s(\vx)=s'(\vx)$. Note that $s(\vy)=s'(\vy)$. We can choose $s''=s'$, for then
$s''\in X$,  $s''(\vx)=s(\vx)$, $s''(\vy)=s(\vy)$, and $s''(\vz)=s'(\vz)$.
\end{proof}

\begin{lemma}
 $\vy\ \bot_{\vx}\ \vz$ logically implies $\=(\vx,\vy\cap\vz)$.
\end{lemma}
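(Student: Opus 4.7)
The plan is straightforward: unpack the definition of functional dependence, apply the independence hypothesis to produce the witness assignment $s''$ given by Definition~\ref{free}, and then observe that the common variables $\vy \cap \vz$ are constrained both by the $\vy$-clause and the $\vz$-clause of the conclusion, forcing equality.

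More concretely, write $\vw = \vy \cap \vz$. To establish $X \models\ \=(\vx,\vw)$, I would take arbitrary $s,s' \in X$ with $s(\vx) = s'(\vx)$ and aim to show $s(\vw) = s'(\vw)$. Using $\vy \bot_{\vx} \vz$, this equality of $\vx$-values is precisely the hypothesis needed to produce some $s'' \in X$ with $s''(\vx) = s(\vx)$, $s''(\vy) = s(\vy)$, and $s''(\vz) = s'(\vz)$.

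Now since $\vw \subseteq \vy$, the condition $s''(\vy) = s(\vy)$ restricts to $s''(\vw) = s(\vw)$; and since $\vw \subseteq \vz$, the condition $s''(\vz) = s'(\vz)$ restricts to $s''(\vw) = s'(\vw)$. Chaining these gives $s(\vw) = s''(\vw) = s'(\vw)$, which is exactly what functional dependence requires.

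There is no real obstacle here: the only mildly subtle point is keeping straight that $\vw$ is simultaneously a subtuple of $\vy$ and of $\vz$, so the single witness $s''$ is pinned to two values at $\vw$ which must therefore coincide. This is really the logical content of the lemma — independence of overlapping tuples over $\vx$ forces the overlap to be a function of $\vx$.
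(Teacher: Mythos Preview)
Your proof is correct and follows essentially the same argument as the paper: pick $s,s'\in X$ with $s(\vx)=s'(\vx)$, invoke the independence atom to obtain the witness $s''$, and observe that on any variable in $\vy\cap\vz$ the value of $s''$ is forced to agree with both $s$ and $s'$. The only cosmetic difference is that the paper works with an arbitrary element $w\in\vy\cap\vz$ rather than naming the whole intersection $\vw$.
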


\begin{proof}
Suppose 
$X$ satisfies $\vy\ \bot_{\vx}\ \vz$. To prove that $X$ satisfies $\=(\vx,\vy\cap\vz)$, let $s,s'\in X$ such that  
 $s(\vx)=s'(\vx)$. We show $s(\vy\cap\vz)=s'(\vy\cap\vz)$. Let us choose $s''\in X$ such that $s''(\vx)=s(\vx)$, $s''(\vy)=s(\vy)$, and $s''(\vz)=s'(\vz)$. Let $w\in \vy\cap\vz$. 
Then $s(w)=s''(w)=s'(w)$.
\end{proof}

\begin{corollary}
$\=(\vx,\vy)\iff \vy\ \bot_{\vx}\ \vy$
\end{corollary}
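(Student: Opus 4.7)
The plan is to derive the corollary directly as a special case of the two preceding lemmas, specializing the extra parameter $\vz$ to $\vy$ in each direction.

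For the forward direction $\=(\vx,\vy)\Rightarrow \vy\ \bot_{\vx}\ \vy$, I would invoke the first lemma, which states $\=(\vx,\vy)\Rightarrow \vy\ \bot_{\vx}\ \vz$ for any tuple $\vz$. Taking $\vz:=\vy$ yields exactly $\vy\ \bot_{\vx}\ \vy$, and we are done.

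For the converse $\vy\ \bot_{\vx}\ \vy\Rightarrow \=(\vx,\vy)$, I would apply the second lemma, $\vy\ \bot_{\vx}\ \vz \Rightarrow \=(\vx,\vy\cap\vz)$, with $\vz:=\vy$. Since $\vy\cap\vy=\vy$, this gives $\=(\vx,\vy)$ directly.

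There is no genuine obstacle here: the corollary is essentially a syntactic specialization of Lemmas above, and the only thing to verify is that the set-theoretic identity $\vy\cap\vy=\vy$ suffices to collapse the conclusion of Lemma 2 into the statement of functional dependence. The proof therefore takes only two lines, one per direction.
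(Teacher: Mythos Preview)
Your proposal is correct and is exactly the intended argument: the paper states this as an immediate corollary of the two preceding lemmas, and specializing $\vz:=\vy$ in each (together with $\vy\cap\vy=\vy$) is precisely how one reads off both directions.
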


So dependence is just a special case of independence, when independence is defined in the more general form. This has the pleasant consequence that when we define {\em independence logic} $\I$ by adding the atomic formulas $\yxz$ to first order logic, we automatically include all of dependence logic. 

We get the following reformulation of the corollary:

\begin{corollary}
$\vy\ \bot_{\vx}\ \vy\Rightarrow\vy\ \bot_{\vx}\ \vz$ (Constancy Rule)
\end{corollary}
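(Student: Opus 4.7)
The plan is to observe that this corollary is an immediate chain of two results established just above, so the proof will be a short derivation rather than a fresh combinatorial argument on teams.

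First I would invoke the previous corollary, which states the equivalence $\=(\vx,\vy)\iff \vy\ \bot_{\vx}\ \vy$. From the hypothesis $\vy\ \bot_{\vx}\ \vy$, the right-to-left direction of that equivalence gives $\=(\vx,\vy)$. Intuitively this is the step that turns ``$\vy$ is independent of itself given $\vx$'' into ``$\vy$ is constant across each $\vx$-fiber'', i.e., functional dependence of $\vy$ on $\vx$.

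Next I would apply the first of the two lemmas above, which says $\=(\vx,\vy)$ logically implies $\vy\ \bot_{\vx}\ \vz$ (for any tuple $\vz$). Applying this to the $\=(\vx,\vy)$ produced in the previous step yields $\vy\ \bot_{\vx}\ \vz$, which is the desired conclusion. Thus the whole proof is just the composition
\[
\vy\ \bot_{\vx}\ \vy \;\Longrightarrow\; \=(\vx,\vy) \;\Longrightarrow\; \vy\ \bot_{\vx}\ \vz.
\]

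There is essentially no obstacle here, since both implications have already been verified in the excerpt; the only mild subtlety is to note that the second lemma is stated for an arbitrary tuple $\vz$, so one must make sure the reader sees that the $\vz$ in the conclusion is genuinely unconstrained. If a self-contained proof at the level of teams is preferred, one can alternatively unfold the argument: given $s,s'\in X$ with $s(\vx)=s'(\vx)$, the hypothesis $\vy\ \bot_{\vx}\ \vy$ (used with $s$ and $s'$ playing both roles) forces $s(\vy)=s'(\vy)$, so one may simply take $s''=s'$ and verify the three required equations; but invoking the lemma and corollary already in hand is cleaner and matches the paper's stated intent to present the result as a reformulation.
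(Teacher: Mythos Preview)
Your proposal is correct and matches the paper's approach exactly: the paper presents this corollary as ``a reformulation of the corollary'' just above, i.e.\ as the chain $\vy\ \bot_{\vx}\ \vy \Rightarrow\ \=(\vx,\vy) \Rightarrow \vy\ \bot_{\vx}\ \vz$ obtained from the equivalence and the first lemma. Your optional direct team-level argument is also fine and simply unfolds these two steps.
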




Here are some rather trivial properties

\begin{lemma} 
\begin{enumerate}
\item $\vx\ \bot_{\vx}\ \vy$ (Reflexivity Rule)
\item   $\vz\ \bot_{\vx}\ \vy\Rightarrow\vy\ \bot_{\vx}\ \vz$\  (Symmetry Rule)
\item $\vec{y}{y'}\ \bot_{\vec{x}}\ \vec{z}{z'}\Rightarrow\vec{y}\ \bot_{\vec{x}}\ \vec{z}$. (Weakening  Rule)
\item If $\vec{z'}$ is a permutation of $\vz$, $\vec{x'}$ is a permutation of $\vx$, $\vec{y'}$ is a permutation of $\vy$, then $\vy\ \bot_{\vx}\ \vz\Rightarrow\vec{y'}\ \bot_{\vec{x'}}\ \vec{z'}$. (Permutation Rule) 
\end{enumerate} 
\end{lemma}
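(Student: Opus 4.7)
The plan is to verify each of the four rules by a direct unwinding of Definition \ref{free}, since every clause follows almost mechanically from the three-conjunct semantic condition on teams. For \emph{Reflexivity} $\vx\ \bot_{\vx}\ \vy$, given any $s,s' \in X$ with $s(\vx) = s'(\vx)$, I would take the witness $s'' := s'$: then $s''(\vx) = s'(\vx) = s(\vx)$, which simultaneously serves as the witness both for the ``$\vx$-slot'' on the side of $s$ and for the ``$\vy$-slot'' of the definition (here instantiated by $\vx$ itself), and $s''(\vy) = s'(\vy)$ supplies the final clause.

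For \emph{Symmetry}, given $\vz\ \bot_{\vx}\ \vy$ and $s,s' \in X$ with $s(\vx) = s'(\vx)$, the key move is to apply the hypothesis with the arguments swapped, letting $s'$ play the role of the first assignment and $s$ the role of the second. This yields $s'' \in X$ with $s''(\vx) = s'(\vx) = s(\vx)$, $s''(\vz) = s'(\vz)$, and $s''(\vy) = s(\vy)$, which are exactly the three clauses required for $\vy\ \bot_{\vx}\ \vz$. For \emph{Weakening}, the hypothesis $\vec{y}{y'}\ \bot_{\vec{x}}\ \vec{z}{z'}$ applied to the same pair $s,s'$ produces a witness $s''$ satisfying the corresponding equations for the longer tuples; restricting those equations to the sub-tuples $\vec{y}$ and $\vec{z}$ gives precisely the witness required for $\vec{y}\ \bot_{\vec{x}}\ \vec{z}$.

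The \emph{Permutation Rule} is equally immediate: Definition \ref{free} refers to the values $s(\vec{u})$ only through coordinatewise equality, which is preserved under any simultaneous permutation of a tuple on the two sides of an equation, so the very same $s''$ that witnesses one side of the implication witnesses the other. I do not expect a genuine obstacle in any of the four parts, since each reduces to a single application of the definition; the only point worth flagging is the argument-swap in the Symmetry clause, which is the one place where blindly re-using the variable names from the hypothesis would produce the wrong witness.
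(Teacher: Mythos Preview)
Your proposal is correct: each of the four rules is verified by a direct unwinding of Definition~\ref{free}, and your choice of witnesses (in particular $s'':=s'$ for Reflexivity and the argument swap for Symmetry) is exactly right. The paper itself gives no proof of this lemma at all---it simply introduces the four items as ``rather trivial properties'' and moves on---so your explicit verification is fully in line with, and more detailed than, the paper's treatment.
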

 
A little less trivial are the following properties:

\begin{lemma} 
\begin{enumerate}
\item   $\vz\ \bot_{\vx}\ \vy\Rightarrow\vy\vx\ \bot_{\vx}\ \vz\vx$\  (Fixed Parameter Rule)
\item $\xzy\wedge\vec{u}\ \bot_{\vec{z}\vec{x}}\ \vec{y}\Rightarrow\uzy$. (First Transitivity  Rule)
\item $\yzy\wedge\vec{z}\vec{x}\ \bot_{\vec{y}}\ \vec{u}\Rightarrow\xzu$ (Second Transitivity  Rule\footnote{We are indebted to P. Constantinou and P. Dawid for pointing out an error in our original formulation.})
\end{enumerate} 
\end{lemma}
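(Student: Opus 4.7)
The plan is to verify each of the three rules directly from Definition \ref{free}, chaining witnesses as needed and, for the second transitivity rule, invoking the corollary that $\yzy$ is equivalent to $\=(\vz,\vy)$.

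For the Fixed Parameter Rule, I would apply the Symmetry Rule already noted so that $\vz\ \bot_{\vx}\ \vy$ becomes $\vy\ \bot_{\vx}\ \vz$. Then, given $s,s'\in X$ with $s(\vx)=s'(\vx)$, the witness $s''$ produced by the definition already has $s''(\vx)=s(\vx)=s'(\vx)$, $s''(\vy)=s(\vy)$, and $s''(\vz)=s'(\vz)$, which is precisely what is needed to certify $\vy\vx\ \bot_{\vx}\ \vz\vx$. This step is essentially bookkeeping.

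For the First Transitivity Rule, my strategy is a two-stage witness construction. Given $s,s'\in X$ with $s(\vz)=s'(\vz)$, I first apply $\xzy$ to $s,s'$ to obtain an intermediate $s^*\in X$ with $s^*(\vz)=s(\vz)$, $s^*(\vx)=s(\vx)$, and $s^*(\vy)=s'(\vy)$. Now $s$ and $s^*$ agree on $\vz\vx$, so I apply $\vec{u}\ \bot_{\vec{z}\vec{x}}\ \vec{y}$ to the pair $(s,s^*)$ to produce $s''\in X$ with $s''(\vz\vx)=s(\vz\vx)$, $s''(\vu)=s(\vu)$, and $s''(\vy)=s^*(\vy)=s'(\vy)$; this $s''$ witnesses $\uzy$.

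For the Second Transitivity Rule, the key observation is that the hypothesis $\yzy$ is, by the corollary above, the dependence atom $\=(\vz,\vy)$. So any two $s,s'\in X$ with $s(\vz)=s'(\vz)$ automatically satisfy $s(\vy)=s'(\vy)$. I can then apply $\vec{z}\vec{x}\ \bot_{\vec{y}}\ \vec{u}$ directly to this pair and extract a witness $s''\in X$ with $s''(\vy)=s(\vy)$, $s''(\vz\vx)=s(\vz\vx)$, and $s''(\vu)=s'(\vu)$; reading off the $\vz$, $\vx$, and $\vu$ coordinates gives exactly a witness for $\xzu$.

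None of the three proofs looks genuinely hard; each is a short diagram-chase at the level of teams. The only mild pitfall, and the step I would be most careful about, is the Second Transitivity Rule: it is tempting to try to use $\yzy$ in a symmetric, more intricate way, whereas the efficient move is to immediately trade it for the dependence atom $\=(\vz,\vy)$ so that the condition $s(\vz)=s'(\vz)$ upgrades to $s(\vy)=s'(\vy)$ and thereby unlocks the second hypothesis.
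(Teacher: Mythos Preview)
Your proposal is correct. Each of the three verifications is exactly the short diagram-chase at the level of teams that one expects: the Fixed Parameter Rule is bookkeeping once symmetry is invoked, the First Transitivity Rule is a two-stage witness construction, and for the Second Transitivity Rule you rightly trade $\vy\ \bot_{\vz}\ \vy$ for the dependence atom $\=(\vz,\vy)$ so that agreement on $\vz$ upgrades to agreement on $\vy$, at which point the second hypothesis applies directly.

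There is nothing to compare against: the paper states this lemma without proof, treating the three items as routine consequences of Definition~\ref{free} (and of the corollary $\=(\vx,\vy)\iff\vy\ \bot_{\vx}\ \vy$). Your write-up supplies precisely the details the paper suppresses.
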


Note that the Second Transitivity Rule gives by letting $\vu=\vx$:
$$\yzy\wedge\xyx\Rightarrow\xzx,$$
which is the transitivity axiom of functional dependence. In fact Armstrong's Axioms are all derivable from the above rules. It remains open whether our rules permit a completeness theorem like Armstrong's Axioms do\footnote{We are grateful to F. Engstr\"om for pointing out that this problem is already solved in a slightly different formulation in database theory \cite{CH}.}.

Now we can define a new logic by adding the independence atoms $\yxz$ to first order logic just as dependence logic $\D$ was defined in \cite{MR2351449}:

\begin{definition}
We define {\em independence logic} $\I$ as the extension of first order logic by the new atomic formulas $$\yxz$$ for all sequences $\vy,\vx,\vz$ of variables. The negation sign $\neg$ is allowed in front of atomic formulas. The other logical operations are $\wedge,\vee,\exists$ and $\forall$. The semantics is defined for the new atomic formulas as in Definition~\ref{free} and in other cases exactly as for dependence logic in \cite{MR2351449}. The negation of $\yxz$ is satsified by the empty team alone. 
\end{definition}

There is an obvious alternative game-theoretic semantics based on the idea that a winning strategy should allow ``mixing" of plays in the same way Definition~\ref{free} mixes assignments $s$ and $s'$ into a new one $s''$. As we see below, this means that the existential player cannot use her own moves to code signals to herself and thereby go around requirements of imperfect information.

\def\rel{\mathop{\rm rel}}

Let us recall the following characterization of dependence logic\footnote{A team  $X$ can be thought of as a relation $\rel(X)$ by identifying an assignment $s$ with domain $\{x_1,...,x_n\}$ with the $n$-tuple $(s(x_1),...,s(x_n))$. With this in mind, we use the notation $M\models\Phi(X)$ for $(M,\rel(X))\models\Phi(S)$.}:

\begin{theorem}[\cite{kont}]\label{kont}The expressive power of formulas $\phi(x_1,...,x_n)$ of dependence logic is exactly that of existential second order sentences with the predicate for the team negative. More exactly, let us fix a vocabulary $L$ and an $n$-ary predicate symbol $S\notin L$. Then:
\begin{itemize}
\item For every $L$-formula $\phi(x_1,...,x_n)$ of dependence logic  there is an existential second order $L\cup\{S\}$-sentence $\Phi(S)$, with $S$ negative only, such that for all $L$-structures $M$  and all teams $X$:
\begin{equation}\label{eq}
M\models_X\phi(x_1,...,x_n)\iff M\models\Phi(X).
\end{equation}
\item For every existential second order $L\cup\{S\}$-sentence $\Phi(S)$, with $S$ negative only,  there exists an $L$-formula $\phi(x_1,...,x_n)$ of dependence logic  such that (\ref{eq}) holds  for all $L$-structures $M$ and all teams $X\ne\emptyset$. 
\end{itemize} \end{theorem}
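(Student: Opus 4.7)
The plan is to prove both directions separately: the forward implication by induction on the formula structure of $\phi$, and the backward implication by a Skolem-style translation that encodes second-order quantifiers by dependence atoms.

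For the forward direction I induct on $\phi$. A first-order literal $\alpha$ translates to $\forall\vx(S(\vx)\to\alpha)$, in which $S$ occurs negatively. The dependence atom $\=(\vy,\vx)$ translates to the universal condition (\ref{funct}) of Definition~\ref{def}, again with $S$ negative. Conjunction is clear. For $\phi\vee\psi$, I exploit the downward closure of dependence-logic formulas (inherited from the atoms): the team-splitting semantics $X=Y\cup Z$ with $Y\models\phi$ and $Z\models\psi$ is translated to
$$\exists T_1\,\exists T_2\,\bigl(\forall\vx(S(\vx)\to T_1(\vx)\vee T_2(\vx))\wedge\Phi_1(T_1)\wedge\Phi_2(T_2)\bigr),$$
so $S$ stays negative: the one-sided containment $S\subseteq T_1\cup T_2$ is enough because $\Phi_1$ and $\Phi_2$ are downward closed. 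For $\exists y\,\phi$, team semantics enlarges $X$ via some function $F\colon X\to M$, which I translate as $\exists T\,(\forall\vx(S(\vx)\to\exists y\,T(\vx,y))\wedge\Phi(T))$, again invoking downward closure. For $\forall y\,\phi$, which extends every $s\in X$ by every $y$-value, I substitute $T(\vx,y):=S(\vx)$ into $\Phi(T)$, preserving negativity. At every step $S$ remains negative, and correctness follows from the induction hypothesis together with downward closure.

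For the backward direction I put $\Phi(S)$ into Skolem normal form $\exists f_1\cdots\exists f_k\,\forall\vz\,\psi(f_1,\ldots,f_k,S,\vz)$, with $\psi$ quantifier-free and $S$ still negative. I translate it to dependence logic over free variables $x_1,\ldots,x_n$ whose team represents the rows of $S=\rel(X)$. The block $\forall\vz$ becomes the team-semantic universal quantifier, which extends each assignment by every $\vz$-value. Each Skolem function $f_i$ on argument tuple $\vec u_i$ is witnessed by a fresh existentially quantified variable $y_i$ together with the dependence atom $\=(\vec u_i,y_i)$, which pins $y_i$ down to a function of $\vec u_i$. Negated occurrences $\neg S(\vec t)$ in $\psi$ are handled using the fact that $\rel(X)$ is encoded by the $\vx$-coordinates of the original team: I employ team-disjunction to peel off, in the extended team, those rows on which $\vec t$ coincides with $\vx$, so that the remainder satisfies the literal genuinely. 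The hypothesis $X\neq\emptyset$ is needed to guarantee that this encoding of $\rel(X)$ by $\vx$ is faithful.

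The main obstacle I expect is the translation of negated $S$-atoms in the backward direction. The subtlety is that $\neg S(\vec t)$ is really an \emph{existential} assertion about the original team --- ``no assignment in $X$ witnesses $\vec t$'' --- but at the point where it is evaluated the team has already been enlarged by the preceding universal and existential quantifiers, and team semantics offers no direct existential access to the original rows. Showing that team-splitting together with the dependence atoms that carry the Skolem functions exactly simulates the intended semantics of $\neg S(\vec t)$, and verifying correctness of the inductive step through the quantifier-free $\psi$, is where the real work will lie; the forward direction is comparatively routine once one observes that downward closure absorbs the one-sided containments needed in the $\vee$, $\exists$ and $\forall$ clauses.
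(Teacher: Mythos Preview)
The paper does not contain a proof of this theorem: it is quoted, with attribution, from \cite{kont} (Kontinen--V\"a\"an\"anen), and the text moves on immediately to the analogous question for independence logic. So there is no in-paper proof to compare against.

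That said, your outline matches the standard argument in the cited source. The forward direction by structural induction, with downward closure absorbing the one-sided inclusions in the $\vee$ and $\exists$ clauses, is exactly how the translation $\tau_\phi$ is built (indeed the present paper reuses that construction verbatim in the proof of Proposition~14). For the converse you have correctly identified both the method (Skolemize, code Skolem functions by dependence atoms, recover $\neg S(\vec t)$ from the $\vx$-columns of the team) and the genuine difficulty (making the $\neg S$-clause go through after the team has been blown up by the leading quantifiers, and why $X\neq\emptyset$ is needed). Your sketch of that step is still only a plan rather than a proof---the precise formula that realises the ``peel off the rows where $\vec t=\vx$'' idea, and the verification that it interacts correctly with the surrounding dependence atoms, is where all the content lives---but the strategy is the right one.
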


The question arises, whether there is a similar characterization for independence logic. We do not know the answer. However, we may note the following:

\begin{proposition}The expressive power of formulas $\phi(x_1,...,x_n)$ of independence logic is contained in that of existential second order sentences with a predicate $S$ for the team. More exactly, let us fix a vocabulary $L$ and an $n$-ary predicate symbol $S\notin L$. Then for every $L$-formula $\phi(x_1,...,x_n)$ of independence logic  there is an existential second order $L\cup\{S\}$-sentence $\tau_{\phi}(S)$ such that for all $L$-structures $M$  and all teams $X$: $M\models_X\phi(x_1,...,x_n)\iff M\models\tau_{\phi}(X)$.
\end{proposition}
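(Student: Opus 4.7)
The plan is to proceed by induction on the structure of $\phi$, defining for each subformula $\psi(\vec{v})$ (with free variables $\vec{v}$ of length $k$) a $\Sigma^1_1(L\cup\{T\})$-sentence $\tau_\psi(T)$, where $T$ is a fresh $k$-ary predicate symbol, such that for every $L$-structure $M$ and every team $X$ with domain $\{\vec{v}\}$ one has $M\models_X \psi(\vec{v})\iff (M,\rel(X))\models \tau_\psi(T)$. The desired $\tau_\phi(S)$ is then obtained at the root of the induction. Since we only claim containment, there is no obligation to restrict the polarity of $T$; this is the point at which the proof departs from Theorem~\ref{kont}, whose analogue for $\I$ would require negative occurrences only.

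For the base cases I would set $\tau_\alpha(T):=\forall\vec{v}(T(\vec{v})\to\alpha)$ for a first-order literal $\alpha$, and handle the new atom $\vy\ \bot_{\vx}\ \vz$ by literally transcribing Definition~\ref{free}: introducing indices for the relevant coordinates of $T$, I would write
\[
\tau_{\yxz}(T):=\forall \vec{u}\forall \vec{u}'\bigl(T(\vec{u})\wedge T(\vec{u}')\wedge \vec{u}{\rest}\vx=\vec{u}'{\rest}\vx\ \to\ \exists\vec{u}''\,(T(\vec{u}'')\wedge \vec{u}''{\rest}\vx\vy=\vec{u}{\rest}\vx\vy\wedge \vec{u}''{\rest}\vz=\vec{u}'{\rest}\vz)\bigr),
\]
which is first-order in $T$ (so trivially $\Sigma^1_1$). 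The negation of an independence atom is satisfied only by the empty team, so $\tau_{\neg(\yxz)}(T):=\forall\vec{u}\,\neg T(\vec{u})$; analogously for negated first-order atoms.

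For the inductive cases I would follow the standard team-semantic recipe and check that each step preserves $\Sigma^1_1$. For conjunction, simply take $\tau_{\phi\wedge\psi}(T):=\tau_\phi(T)\wedge \tau_\psi(T)$. For disjunction, existentially quantify two covers:
\[
\tau_{\phi\vee\psi}(T):=\exists T_1\exists T_2\bigl(\forall\vec{v}(T(\vec{v})\leftrightarrow T_1(\vec{v})\vee T_2(\vec{v}))\wedge \tau_\phi(T_1)\wedge \tau_\psi(T_2)\bigr).
\]
For an existential quantifier, introduce the extended team via a fresh $(k{+}1)$-ary predicate $T'$:
\[
\tau_{\exists y\,\phi}(T):=\exists T'\bigl(\forall\vec{v}(T(\vec{v})\to\exists y\,T'(\vec{v},y))\wedge \forall\vec{v}\forall y(T'(\vec{v},y)\to T(\vec{v}))\wedge \tau_\phi(T')\bigr),
\]
and for the universal quantifier use the full product:
\[
\tau_{\forall y\,\phi}(T):=\exists T'\bigl(\forall\vec{v}\forall y(T'(\vec{v},y)\leftrightarrow T(\vec{v}))\wedge \tau_\phi(T')\bigr).
\]
In each step the newly introduced predicates are bound by an outermost existential second-order quantifier, and the remaining body is a Boolean combination of first-order formulas and sentences already shown to be $\Sigma^1_1$; standard prenexing then yields a $\Sigma^1_1$-sentence.

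The main step to verify is the correctness of the base clause for $\yxz$ and the inductive clauses for $\vee$, $\exists$ and $\forall$, since these must match the team-semantic clauses of $\I$ (taken over from \cite{MR2351449}) on the nose. The only genuinely new work compared with Theorem~\ref{kont} is the independence atom, and there the translation is immediate from Definition~\ref{free}; the real obstacle, namely obtaining a matching lower bound on expressive power (a converse translation from $\Sigma^1_1$ back into $\I$), is explicitly not attempted here and is flagged as open.
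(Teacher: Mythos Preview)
Your proposal is correct and takes essentially the same approach as the paper: induction on $\phi$, with the connective and quantifier cases handled by the standard dependence-logic translation (which the paper does not spell out but simply cites as \cite[Theorem~6.2]{MR2351449}), and the one genuinely new case---the independence atom---dealt with by directly transcribing Definition~\ref{free} into a first-order condition on the team predicate, exactly as you do. One minor slip: your phrase ``analogously for negated first-order atoms'' is misleading, since negated first-order atoms are \emph{not} satisfied only by the empty team and are already correctly covered by your literal clause $\tau_\alpha(T)=\forall\vec v\,(T(\vec v)\to\alpha)$.
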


\begin{proof}
The construction of $\tau_{\phi(x_1,...,x_n)}(S)$ is done by induction on $\phi(x_1,...,x_n)$. The construction is done exactly as in  \cite[Theorem 6.2]{MR2351449} except for 
%
the independence atoms. Consider an independence
atom
\[  x_i \bot_{x_j} x_k. \]
In this case $\tau_{\phi(x_1,\dots,x_n)}(S)$ would be:
\begin{center}
\begin{tabular}{l}
$\forall y_1\dots y_n \forall z_1\dots z_n
((S(\vec{y})\land S(\vec{z})\land y_j=z_j)\to$\\
\hspace{15mm}$\exists u_1\dots \exists u_n (S(\vec{u})\land u_j=y_j\land u_i=y_i \land u_k=z_k))$.\\ 
\end{tabular}
\end{center}
It is obvious how to generalize this to independence atoms
among tuples $(x_i)_{i\in I}$,  $(x_j)_{j\in J}$ and
$(x_k)_{k\in K}$ for any index sets $I,J,K\subseteq \{1,\dots,n\}$.
\end{proof}

For sentences $\phi$  of $\I$ we define as for $\D$: $M\models\phi\iff M\models_{\{\emptyset\}}\phi$.
 
\begin{corollary}
For sentences independence logic and dependence logic are equivalent in expressive power.
\end{corollary}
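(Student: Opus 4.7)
The plan is to prove the corollary by closing the chain of inclusions $\I_{\mathrm{sent}} \subseteq \mbox{ESO} \subseteq \D_{\mathrm{sent}} \subseteq \I_{\mathrm{sent}}$, where $\I_{\mathrm{sent}}$ and $\D_{\mathrm{sent}}$ denote the classes of properties of $L$-structures definable by sentences of independence logic and dependence logic, respectively.

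First I would handle the inclusion $\I_{\mathrm{sent}} \subseteq \mbox{ESO}$. Given an independence logic sentence $\phi$, apply the preceding Proposition with $n=0$ to obtain an existential second order $L\cup\{S\}$-sentence $\tau_\phi(S)$, where $S$ is now a $0$-ary predicate. Since $\phi$ is a sentence, $M\models\phi$ means $M\models_{\{\emptyset\}}\phi$, which by the Proposition is equivalent to $M\models\tau_\phi(\{\emptyset\})$, i.e.\ $\tau_\phi$ with $S$ interpreted as true. Substituting the constant true for $S$ inside $\tau_\phi$ yields a pure ESO $L$-sentence equivalent to $\phi$.

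Next I would handle the inclusion $\mbox{ESO}\subseteq\D_{\mathrm{sent}}$ using the second bullet of Theorem~\ref{kont}. Given any ESO $L$-sentence $\Psi$, view it as an ESO $L\cup\{S\}$-sentence $\Psi$ (with $S$ occurring vacuously, hence trivially negative only), and apply the theorem with $n=0$. This yields a dependence logic formula with no free variables, i.e.\ a dependence logic sentence $\psi$, such that for every $M$ and every nonempty team $X$ of empty assignments, $M\models_X\psi\iff M\models\Psi$. Since $X=\{\emptyset\}$ is nonempty, $M\models\psi\iff M\models\Psi$.

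Finally, the inclusion $\D_{\mathrm{sent}}\subseteq\I_{\mathrm{sent}}$ is immediate from the earlier Corollary $\=(\vx,\vy)\iff\vy\ \bot_{\vx}\ \vy$: every dependence atom is syntactically expressible as an independence atom, so every dependence logic sentence is, term for term, an independence logic sentence with the same semantics. The three inclusions together give equality.

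The only delicate point, and the one I expect to require the most care, is the bookkeeping around the auxiliary predicate $S$ in the zero-ary case: one must be sure that interpreting the $0$-ary $S$ as true faithfully corresponds to the team $\{\emptyset\}$ and that the negativity restriction on $S$ in Kontinen's theorem is harmless when $S$ does not actually appear in $\Psi$. Both are easy checks, but they are essential for the passage between formulas with a team predicate and genuine sentences.
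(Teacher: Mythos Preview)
Your argument is correct and follows essentially the same route as the paper: translate an $\I$-sentence into an ESO sentence via the preceding Proposition, then invoke Theorem~\ref{kont} to get an equivalent $\D$-sentence. The paper's proof only spells out the direction $\I\Rightarrow\D$ and leaves the converse $\D\subseteq\I$ implicit (it follows from the earlier Corollary $\=(\vx,\vy)\iff\vy\ \bot_{\vx}\ \vy$); you make this explicit, and your handling of the $0$-ary predicate $S$ is exactly the bookkeeping the paper elides.
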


\begin{proof} 
Suppose $\phi$ is a sentence of independence logic. There is  an existential second order sentence $\tau_\phi(S)$ such that for every model $M$  we have $$M\models\phi\iff M\models\tau_\phi(\{\emptyset\}).$$ By Theorem~\ref{kont},  there is a sentence $\psi$ of dependence logic such that for every model $M$  we have $$M\models\tau_\phi(\{\emptyset\})\iff M\models\psi.$$ Thus the sentences $\phi$ and $\psi$ are equivalent.
\end{proof}

Note that formulas of independence logic need not be closed downward (i.e. $M\models_X\phi$ and $Y\subseteq X$ need not imply $M\models_Y\phi$), for example $x\boto y$ is not. This is a big difference to dependence logic. Still, the empty team satisfies every independence formula.

 The sentence $$\forall x\forall y\exists z(z\boto x\wedge z=y)$$ is valid in harmony with the intuition that the existential player should be able to make a decision to be independent of $x$ when she chooses $z$ whether she lets $z=y$ or not.

The sentence $$\forall x\exists y\exists z(z\boto x\wedge z=x)$$ is not valid in harmony with the intuition that the existential player needs to follow what the universal player is doing with his $x$ in order to be able to hit $z=x$. In independence friendly logic (\cite{MR97j:03005}) the sentence 
$$\forall x\exists y\exists z/x(z=x),$$ is valid which is often found counter-intuitive. The trick (called ``signaling") is that the existential player stores the value of $x$ into $y$ and then chooses $z$ on the basis of $y$, apparently not needing to know what $x$ is. In fact one might consider the entire independence friendly logic with the following interpretation:
\begin{equation}
\label{faith}
[\exists x/\vy\phi(x,\vy,\vz)]^*=\exists x(\vy\ \bot_{\vz}\ x\wedge [\phi(x,\vy,\vz)]^*)\end{equation} as an alternative to the usual one based on $\=(\vz,\vx)$.
As we have seen above the interpretation (\ref{faith}) is not necessarily entirely faithful. However,  the atom 
$\vy\ \bot_{\vz}\ x$ has one clearly distinguishable meaning of independence of $\vy$ from $x$ so it might be interesting to look at independence friendly logic with this interpretation. 

Our independence atom works well also in giving partially ordered quantifiers 
compositional semantics as the following lemma illustrates: 

\begin{lemma}Suppose $\phi(x,y,u,v,\vec{z})$ is a first order formula. 
Then the following conditions are equivalent:
\begin{description}
\item[(1)]
$M\models_s\left(\begin{array}{cc}
\forall x&\exists y\\
\forall u&\exists v
\end{array}\right)\phi(x,y,u,v,\vec{z})$
\item[(2)]$M\models_{\{s\}}\forall x\exists y\forall u\exists 
v(v\ \bot_{u\vz}\ x\wedge\phi(x,y,u,v,\vec{z}))$
\end{description}
\end{lemma}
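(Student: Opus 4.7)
The plan is to unwind the team semantics of (2) step by step, read off the Skolem functions that the existential player commits to at each existential quantifier, and then show that the independence atom forces the Skolem function for $v$ to be independent of $x$. Once that is established, (2) says exactly what the Henkin quantifier in (1) says.

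First I would trace the team semantics starting from $X_0 = \{s\}$. A key observation I will exploit throughout is that because the initial team is a singleton, $\vec z$ keeps the fixed value $s(\vec z)$ at every stage of the computation. The step $\forall x$ produces $X_1 = \{s[x\mapsto a] : a \in M\}$; the step $\exists y$ commits to a Skolem function $f : M \to M$ (the assignment at this stage is determined by the choice of $a$), giving $X_2 = \{s[x\mapsto a, y\mapsto f(a)] : a \in M\}$; then $\forall u$ enlarges to $X_3$ indexed by pairs $(a,b)$, and $\exists v$ commits to a function $g : M \times M \to M$, producing the team $X_4 = \{s[x\mapsto a, y\mapsto f(a), u\mapsto b, v\mapsto g(a,b)] : a,b \in M\}$.

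Next I would inspect the independence atom $v \bot_{u\vec z} x$ on $X_4$. Since $\vec z$ is constant along the computation, matching on $u\vec z$ reduces to matching on $u$. Applying Definition~\ref{free} to two elements of $X_4$ with common $u$-value $b$ and distinct $x$-values $a, a'$ demands a witness $s'' \in X_4$ with $x$-value $a'$, $u$-value $b$, and $v$-value $g(a,b)$; but the only element of $X_4$ with $x$-value $a'$ and $u$-value $b$ has $v$-value $g(a', b)$. Hence the atom is equivalent to $g(a', b) = g(a, b)$ for all $a, a', b \in M$, i.e.\ $g$ factors through the second coordinate as $g(a, b) = h(b)$. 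Conversely, any $g$ of this form trivially satisfies the atom.

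Combining these, (2) is equivalent to the existence of functions $f, h : M \to M$ with $\phi(a, f(a), b, h(b), s(\vec z))$ for every $a, b \in M$, which is precisely the Skolem-function formulation of the Henkin quantifier in (1). The main (and essentially only) obstacle is the calculation that the independence atom pins down the Skolem function for $v$ as a function of $u$ alone; the constancy of $\vec z$ along the computation makes this transparent, and the remaining implications are bookkeeping.
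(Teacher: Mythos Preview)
Your argument is correct and is essentially the paper's proof unwound. The paper packages the crucial step as the abstract team-level implication $(\=(x,u,v)\wedge v\bot_u x)\Rightarrow\=(u,v)$, valid in every team, and then applies it to the final team (where $\=(x,u,v)$ holds automatically because $v$ was chosen as a function of the assignment); you instead carry out the identical computation directly via the Skolem parametrization of $X_4$, observing that the unique element of $X_4$ with given $x$- and $u$-coordinates forces $g(a,b)=g(a',b)$. The paper's packaging yields a reusable lemma, while your direct calculation handles both directions of the equivalence at once---but the mathematical content is the same.
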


The implication from (1) to (2) is obvious. The converse is 
a consequence of the fact that the implication
\[ (\=(x,u,v) \land v\bot_u x)\ \Rightarrow\ \=(u,v)\]
is valid in all teams.
Suppose not. Then there is a team $X$ satisfing the left side,
with two assigments $s,s'\in X$ such that $s(u)=s'(u)$ but
$s(v)\neq s'(v)$. Since $x$ and $u$ determine $v$
it then follows that $s(x)\neq s'(x)$. Using independence,
we infer that there exists $s''\in X$ with $s''(u)=s(u)=s'(u)$,
$s''(x)=s(x)$, and $s''(v)=s'(v)\neq s(v)$ contradicting
the assumption that $x$ and $u$ determine $v$.

\medskip\noindent{\bf Remark. } In item (2) we might
use instead of $v\ \bot_{u\vz}\ x$ also the stronger condition
$uv\ \bot_{\vz}\ x$. On the other side the weaker condition
$v\ \bot_{\vz}\ x$ does not suffice. At first sight, 
implications of the form  $(\=(x,u,v) \land v\bot x)\ \Rightarrow\ \=(u,v)$
(saying that if a variable is completely determined
by $x$ and $u$ and independent from $x$, then it
is already determined by $u$) might seem plausible.
However, it is not valid in general, and it is quite easy to
construct a counterexample over a domain of three values.

An interesting question raised by the discussion above concerns the
power of independence logic for expressing properties of teams.
Given that independence logic can be embedded into existential second-order logic, and that, on finite structures,
existential second-order logic captures the complexity class NP,
we may formulate this for finite structures as follows:

\medskip
\noindent{\bf Problem:\ }Characterize the NP properties of teams that correspond to formulas of independence logic.

\medskip

\noindent
Note that the corresponding question for dependence logic
is solved by Theorem 13: Dependence logic can define exactly those NP properties of teams that are expressible in existential second-order logic with the predicate for the team occurring only negatively.
Very recently, Pietro Galliani \cite{PG} has also solved the problem for independence logic, showing that actually \emph{all} NP-properties of
teams can be defined in independence logic.

\def\Dbar{\leavevmode\lower.6ex\hbox to 0pt{\hskip-.23ex \accent"16\hss}D}
  \def\cprime{$'$}


\end{document}